\tikzset{
  basic/.style  = {draw, text width=4cm, font=\sffamily, rectangle},
  root/.style   = {basic, rounded corners=2pt, thin, align=center},
  level 2/.style = {basic, rounded corners=2pt, thin, align=center},
  level 3/.style = {basic, rounded corners=2pt, thin, align=center},
   level 4/.style = {basic, rounded corners=2pt, thin, align=center}

}
\def\BState{\State\hskip-\ALG@thistlm}
\title{On Learning a Hidden Directed Graph with Path Queries} 
\author{Mano Vikash Janardhanan\inst{1} \and Lev Reyzin\inst{2}}
\institute{Lifion by ADP \\ New York, NY 10011\\ 
\texttt{mano@manovikash.com}\\ $ $
\and
Department of Mathematics, Statistics, and Computer Science\\ University of Illinois at Chicago \\  Chicago IL 60607 \\
\texttt{lreyzin@uic.edu}}
\begin{document}

\maketitle

\begin{abstract}
In this paper, we consider the problem of reconstructing a directed graph using 
path queries.
In the query model of learning, a graph is hidden from the learner, and the learner can access
information about it with path queries.  For a source and destination node, 
a path query returns whether there is a directed path
from the source to the destination node in the hidden graph.  We first give bounds
for learning graphs on $n$ vertices and $k$ strongly connected components.  We then study
the case of bounded degree directed trees and give new algorithms for learning ``almost-trees'' -- directed trees to which
extra edges have been added.  We also give some lower bound constructions justifying our approach.

\keywords{Active learning  \and Graph algorithms \and Graph learning \and Path queries.}
\end{abstract}

\section{Introduction}

Problems in the area of \textbf{query learning of graphs} capture many different contexts.
In evolutionary tree reconstruction, an experimenter may measure or query the genetic
distance between two species with the goal of placing all the species onto one tree~\cite{Hein89,ReyzinS07b}.
In chemical reaction networks, one may view various chemicals as nodes in a hidden graph,
with edges corresponding to reacting pairs -- here an experimenter may mix chemicals to test for 
a reaction, which corresponds to querying subsets of vertices for the presence of an
independent set~\cite{AbasiBM18,AngluinC04,AngluinC06}.  Each real-world setting entails its own query learning model, in which
the learner typically tries to
reconstruct the (possibly weighted) adjacency information of the graph 
by making as few queries as possible, see e.g.~\cite{BouvelGK05,Reyzin09,ReyzinS07a,abasiB19,ChoiK08,Mazzawi10}. 

The model we study in this paper was introduced by Wang and Honorio~\cite{WangH19} 
and is a directed variant of other
well-studied models \cite{BeerliovaEEHHMR06,Hein89,JagadishS13}; it 
involves learning a directed graph by 
querying ordered pairs of vertices, testing for the presence of a directed path from
the first vertex in the pair to the second.
This model is meant to capture causality, answering the question ``when node $u$ is acted upon, does it create
a change in node $v$?'', but also has other applications, like
trying to learn the topology of the internet using ping requests from one IP address to another.


In particular, the model we consider herein is the following: the hidden target is a directed, unweighted graph,
and the queries are called \textbf{path queries}.  A path query consists of an ordered
pair of vertices $(u,v)$ and the result of the query is $1$ (or ``yes'') if the hidden graph has a directed
path from $u$ to $v$ and $0$ (or ``no'') otherwise.


In their work, Wang and Honorio~\cite{WangH19} prove the following:
Given a directed rooted tree with $n$ nodes and maximum degree at most $d$, there is a randomized algorithm which reconstructs the tree with expected query complexity $O(dn \log^2 n)$.
Their algorithm is recursive -- it picks two vertices at random, and with high probability, the path between those two vertices contains an edge which roughly splits the graph. 
Wang and Honorio show that finding the path and the edge has low query complexity. Then, they split the graph along this edge and recursively apply the technique to each part.

They also show 
an information theoretic lower bound of $\Omega (n\log n)$ and a lower bound of
 $\Omega (nd)$ (using a parallel chain construction) on the number of queries any algorithm must make. For general graphs, they show that
in order to reconstruct a sparse disconnected directed acyclic graph on $n$ nodes, any deterministic algorithm requires at least $\Omega(n^2)$ queries.
This proof involves differentiating between an empty graph and a single edge.
Finally, they show that
in order to reconstruct a sparse connected directed acyclic graph on $n$ nodes, any deterministic algorithm requires at least $\Omega(n^2)$ queries.

In this work, we extend the understanding of path queries by first considering the problem of
learning strongly connected components, as well as the edges between them (see Section~\ref{subsec:scc}). Then,
in our main contribution,
 in Section~\ref{subsec:rdg} we extend the results
of Wang and Honorio by tackling the problem of learning almost-trees (see Definition~\ref{def:at}).  Almost-trees
are trees with an extra edge.  In the case of evolutionary
trees, this begins to tackle real-world problems caused by processes like hybridization~\cite{Barton01}, where on occasion a species can have two distinct paths to an 
ancestor, breaking the expected tree-structure of evolution.
Our approach matches the bound of Wang and Honorio's algorithm (up to polylog factors) and is more general.

\section{Preliminaries}

Let $G=(V,E)$ be a directed graph with vertex set $V$ and edge set $E$. Let $(i,j)$ denote the directed edge from $i$ to $j$. We assume $|V|=n$. Two vertices $i,j$ are said to be strongly connected if there is a directed path from $i$ to $j$ and $j$ to $i$. This binary relation is an equivalence relation and the induced equivalence classes are called strongly connected components. Let $G$ have $k$ strongly connected components, the collection of strongly connected components $\{S_1,S_2,\ldots,S_k\}$ forms a partition of $V$.

A directed graph is called acyclic if it has no cycle. Hence, a directed graph is acyclic if and only if it has no strongly connected subgraphs with more than one vertex. 

If we start with an undirected graph $G$, pick a $r\in V$ called root and orient the edges such that there is a path from $r$ to all other $v\in V$, the resulting directed graph is called a rooted directed graph. If the undirected graph we started with was a tree, the resulting directed graph is called a rooted directed tree.

In their work, Wang and Honorio \cite{WangH19} define path queries as follows:

\begin{definition}[path query]
Let $G=(V,E)$ be a directed graph. A path query is a function $Q_G: V \times V \rightarrow \{0,1\}$ such that $Q_G(i,j)=1$ if there exists a path in $G$ from $i$ to $j$, and $Q_G(i,j)=0$ otherwise.
\end{definition}

They give an algorithm for reconstructing bounded-degree directed rooted trees and make observations on what type of edges are not learnable. In particular, they observe transitive edges are not learnable where transitive edges are defined as follows:

\begin{definition}[transitive edges]
Let $G=(V,E)$ be a directed graph. We say an edge $(i,j)\in E$ is transitive if there exists a directed path from $i$ to $j$ of length greater than 1.
\end{definition}
We give new algorithms for reconstructing bounded-degree directed graphs using path queries which work for regimes other than bounded degree directed rooted trees. Because it is not possible to learn transitive edges, we will either redefine the notion of learning when transitive edges are present in the graph in Section~\ref{subsec:scc} or consider promise instances where such edges are not present in Section~\ref{subsec:rdg}.

We now provide a few useful definitions that are needed for later.  We begin with notions of a layered graph and graph height, a useful definition of almost-trees, and the notions of descendants, of ancestors, and of a parent in a tree.

\begin{definition}[layered graph, graph height]
Given a rooted directed graph $G$ with root $r$, any tree $T\subseteq G$ which contains paths from $r$ to all other $v\in V$ is called a layered graph of $G$. The length of the longest path in $G$ from $r$ to any other $v\in V$ is denoted by $h$ and is called the height of $G$.
\end{definition}

\begin{definition}[almost-tree]\label{def:at}
A rooted directed graph $G$ with root $r$ is an almost-tree if $G$ is the union of a rooted directed tree and a single additional directed edge.
\end{definition}

\begin{definition}[descendants, ancestors, parent]
We define the descendant set, ancestor set and parent of a vertex $i$ as follows:
\begin{itemize}
\item $D(i) = \{ u: Q_G(i,u) = 1\}$
\item $A(i) = \{ u : Q_G(u,i) = 1\}$
\item For a rooted directed tree, let $p(i)$ denote the vertex which is the parent of $i$.
\end{itemize}
\end{definition}

Note that we can find both $D(i)$ and $A(i)$ with $2(n-1)$ queries by $Q_G(u,i)$ and $Q_G(i,u)$ for all $u\in V$.

\section{Learnability results}

We begin with some simpler results, which clarify the query complexity of recovering the strongly connected components of a graph.

\subsection{Strongly connected components}\label{subsec:scc}

Suppose $G$ has $k$ strongly connected components, then we have the following upper bound. Note that when we have strongly connected components, there are transitive edges and hence we cannot reconstruct all the edges within each component. Also, note that there could be transitive edges across components. For example, suppose there are three vertices $a$, $b$ and $c$ which are strongly connected components individually. Suppose there is an edge from $a$ to $b$ and another edge from $b$ to $c$, then the edge from $a$ to $c$ is transitive and cannot be learnt. Hence, assuming that there are no transitive edges across strongly connected components, the notion of learning here is to find the strongly connected components $\{S_1,S_2,\ldots,S_k\}$ of $G$ and for each $i,j\in [k]$, $i\neq j$, whether there is an edge between some vertex in $S_i$ to some vertex in $S_j$.

\begin{theorem}
Assuming that there are no transitive edges across strongly connected components, query complexity to learn a graph is $O(nk)$.
\end{theorem}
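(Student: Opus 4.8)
The plan is to recover the structure in two phases: first the partition of $V$ into strongly connected components, then the edges of the \emph{condensation} of $G$. For the first phase I would work in rounds, maintaining a set $U$ of still-unclassified vertices, initially $U=V$. In each round, pick an arbitrary $v\in U$ and issue the at most $2|U|-2$ queries $Q_G(v,u)$ and $Q_G(u,v)$ for all $u\in U\setminus\{v\}$; the strongly connected component of $v$ is then exactly $\{v\}\cup\{u: Q_G(v,u)=Q_G(u,v)=1\}$. The key point is that, since strongly connected components are the classes of an equivalence relation and each round removes an entire class, the full component of $v$ is contained in $U$, so restricting queries to $U$ loses nothing and the answers are correct for $G$ itself (we never need to query a modified graph — component membership is a global property of $G$). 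After at most $k$ rounds $U$ is empty, and each round costs at most $2n$ queries, so this phase uses $O(nk)$ queries and produces $S_1,\dots,S_k$ together with representatives $r_1,\dots,r_k$.

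For the second phase, let $G'$ be the condensation DAG on vertex set $\{S_1,\dots,S_k\}$, with an arc $S_i\to S_j$ whenever $G$ has an edge from some vertex of $S_i$ to some vertex of $S_j$; recovering $G'$ is exactly the required output. For $u\in S_i$ and $w\in S_j$ with $i\ne j$, the value $Q_G(u,w)$ equals $1$ iff $S_j$ is reachable from $S_i$ in $G'$, and this is independent of the choice of representatives. Hence I would read off the reachability relation of $G'$ from the values $Q_G(r_i,r_j)$ over all ordered pairs $i\ne j$ — at most $k^2$ queries, and in fact these values are already available as a by-product of the first phase.

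It remains to extract the arcs of $G'$ from its reachability relation, and this is where the promise is used. An edge of $G$ crossing from $S_i$ to $S_j$ is transitive precisely when $G'$ has a path of length at least $2$ from $S_i$ to $S_j$; so the hypothesis that no edge across strongly connected components is transitive says exactly that $G'$ is its own transitive reduction. A transitively reduced DAG is determined by its transitive closure: concretely, $S_i\to S_j$ is an arc of $G'$ iff $Q_G(r_i,r_j)=1$ and there is no $\ell\notin\{i,j\}$ with $Q_G(r_i,r_\ell)=Q_G(r_\ell,r_j)=1$. This yields all inter-component edges with no additional queries, so the total query count is $O(nk)+O(k^2)=O(nk)$, using $k\le n$.

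The genuinely delicate step is the last one: one must verify that the stated promise is equivalent to the condensation being transitively reduced (the nontrivial direction being that a crossing edge with a length-$\ge 2$ parallel path in $G'$ is indeed transitive in $G$), and then that transitive-reducedness makes the edge set recoverable from reachability via the displayed criterion. The rest — correctness of the component test and the bookkeeping that gives the $O(nk)$ bound — is routine.
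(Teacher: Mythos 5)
Your proof is correct and follows the same two-phase strategy as the paper: recover the partition into strongly connected components with $O(nk)$ queries (your peeling argument is essentially a self-contained re-derivation of the Reyzin--Srivastava proposition the paper cites), then spend $O(k^2)$ queries on pairs of representatives and use $k\le n$. Your treatment of the second phase is actually more careful than the paper's one-line version: you correctly observe that $Q_G(r_i,r_j)$ only gives the reachability relation of the condensation, and that the no-transitive-crossing-edges promise is exactly what lets you recover the edge set as the transitive reduction of that relation --- a step the paper leaves implicit.
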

\begin{proof}
It follows from Proposition~$2$ of the work of Reyzin and Srivastava~\cite{ReyzinS07a} that we can recover the partition $\{S_1,S_2,\ldots,S_k\}$ in $O(nk)$ queries. 
Then, we can 
perform $O(k^2)$ queries to learn edges that go between two strongly connected components by querying any pair of vertices from each pair of the learned
strongly connected components.  Finally, we observe that $$O(nk)+O(k^2) = O(nk)$$ since it must be that $k \le n$.
\end{proof}


\subsection{Rooted directed graphs}\label{subsec:rdg}

For rooted directed graphs let us fix the notion of learning to completely reconstruct all the edges. This will be our definition of learnability for the rest of the paper. Note that for almost-trees, the additional edge should follow some natural properties for the problem to be well defined. Firstly, the extra edge cannot be a transitive edge. Also, if the additional edge goes from a node to an ancestor, a strongly connected component is created and it becomes impossible to reconstruct the edges in that component. An almost tree is defined to be path query reconstructable if all the edges can be recovered by path queries.

\subsection*{Lower bound}
We start with a lower bound. We give a lower bound of $\Omega(n^2)$ for path query reconstructable almost-trees with maximum degree $d=O(1)$.

\begin{theorem}\label{thm:lowerbound}
There exists a path query reconstructable almost-tree $G$ on $n-1$ vertices with maximum degree $d= O(1)$ such that any randomized algorithm to reconstruct $G$ requires at least $\Omega(n^2)$ queries in expectation.
\end{theorem}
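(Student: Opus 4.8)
The plan is to construct a distribution over path-query-reconstructable almost-trees on $n-1$ vertices with constant maximum degree such that any deterministic algorithm, when run against a graph drawn from this distribution, must make $\Omega(n^2)$ queries in expectation; Yao's principle then gives the randomized lower bound. The natural candidate is a ``parallel chains'' construction in the spirit of Wang and Honorio's $\Omega(nd)$ bound: take a root $r$ with, say, a constant number of outgoing paths, and attach two long directed paths (chains) $P_1$ and $P_2$, each of length $\Theta(n)$, plus whatever fixed tree edges are needed to keep everything rooted and of bounded degree. The single extra edge of the almost-tree will be chosen to go from a vertex $x$ on $P_1$ to a vertex $y$ on $P_2$; the hidden information is the pair $(x,y)$, of which there are $\Theta(n^2)$ choices.

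The key steps, in order: (1) Fix the tree skeleton — two chains $P_1 = (u_1 \to u_2 \to \cdots \to u_m)$ and $P_2 = (v_1 \to v_2 \to \cdots \to v_m)$ with $m = \Theta(n)$, both hanging off the root $r$, so all degrees are at most $3$. (2) Verify the promise: adding an edge $(u_i, v_j)$ from a vertex on the first chain to a vertex on the second creates no cycle (the chains are disjoint and neither reaches back to $r$), and the edge is non-transitive because within the tree there is no directed path from $u_i$ to $v_j$ at all — so every graph in the family is a path-query-reconstructable almost-tree. (3) Analyze the information a single path query $Q_G(a,b)$ reveals about $(i,j)$: a query's answer differs from its answer on the pure tree only when the query ``uses'' the extra edge, i.e. when $a$ is an ancestor of $u_i$ (equivalently $a \in P_1$ with index $\le i$, or $a = r$) and $b$ is a descendant of $v_j$ (i.e. $b \in P_2$ with index $\ge j$). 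So a query $Q_G(u_s, v_t)$ returns $1$ instead of $0$ precisely when $s \le i$ and $t \ge j$; an ``informative'' query at $(s,t)$ only distinguishes the current consistent set into $\{(i,j): i \ge s, j \le t\}$ versus its complement. (4) Set up the adversary/potential argument: maintain the set $C$ of pairs still consistent with all answers so far; initially $|C| = \Theta(n^2)$. Show that each query removes at most $O(n)$ pairs from $C$ — a query $Q_G(u_s,v_t)$ partitions remaining pairs along the ``staircase'' $\{i \ge s\} \times \{j \le t\}$, and an adversary answering to keep the larger side (or a direct counting argument over the uniform distribution) ensures the consistent set shrinks by only an additive $O(n)$ per query. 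Hence $\Omega(n)$ queries are needed just to reach $|C|$ below $n$, and $\Omega(n^2)$ to pin down $(i,j)$ uniquely. (5) Convert to expectation via Yao: under the uniform distribution on the $\Theta(n^2)$ pairs, the expected number of queries of any deterministic algorithm is $\Omega(n^2)$, since after $o(n^2)$ queries a constant fraction of the probability mass remains on at least two indistinguishable graphs.

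The main obstacle I expect is step (4): making precise that no clever adaptive sequence of queries can shrink the consistent set faster than $O(n)$ per query. The subtlety is that queries of the form $Q_G(u_s, v_t)$ cut the grid of pairs $(i,j)$ along axis-aligned ``upper-left rectangles,'' and one must rule out that a small number of such cuts isolates a point — which would be false if arbitrary subsets could be cut, but is true here because each cut is a monotone staircase region. I would formalize this by observing that the family of realizable ``yes-sets'' $\{(i,j): i \ge s,\ j \le t\}$ is closed under the coordinatewise partial order in a way that forces any algorithm to essentially binary-search each coordinate only against the *other* coordinate's current bound, and a careful adversary keeps both coordinates' uncertainty $\Omega(n)$ until $\Omega(n)$ queries have been spent on each ``level,'' totaling $\Omega(n^2)$. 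An alternative, possibly cleaner route is an explicit adversary that commits to answers maintaining a consistent ``rectangle'' of candidate $(i,j)$ and shows the rectangle's area drops by at most its perimeter $O(n)$ per query; I would pursue whichever bookkeeping is tighter. Everything else — the skeleton, degree bound, and promise verification — is routine.
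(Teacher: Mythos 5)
Your construction does not work, and the gap is exactly where you feared it would be (step 4) --- but it cannot be patched. In the two-parallel-chains family, the set of queries affected by the hidden edge $(u_i,v_j)$ is the quadrant $\{(s,t): s\le i,\ t\ge j\}$, and this monotone structure is precisely what an algorithm can exploit: query $Q_G(u_s,v_m)$ for the last vertex $v_m$ of $P_2$; the answer is $1$ iff $s\le i$, so a binary search over $s$ recovers $i$ in $O(\log n)$ queries, and then $Q_G(u_i,v_t)=1$ iff $t\ge j$, so a second binary search recovers $j$. Hence your family is reconstructable with $O(\log n)$ queries beyond the known skeleton, and no adversary can help: the query $(u_{m/2},v_m)$ splits the candidate grid into two halves of size $m^2/2$ each, so whichever answer is given, the consistent set drops by a constant factor, not by an additive $O(n)$. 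The claimed invariants ``each query removes at most $O(n)$ pairs'' and ``the rectangle's area drops by at most its perimeter'' are both false for this reason. The underlying issue is that the tail $u_i$ of the hidden edge has $\Theta(n)$ ancestors that do \emph{not} already reach the head's descendants, and the head $v_j$ has $\Theta(n)$ descendants, so a single query can test a coordinate-wise halfspace of candidates.

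The fix, which is what the paper does, is to choose a skeleton in which the hidden edge changes the answer to \emph{exactly one} query, turning the problem into finding a needle among $\Theta(n^2)$ equally plausible queries. Concretely, the paper uses a caterpillar: a spine $v_1\to\cdots\to v_{n/2}$ rooted at $v_1$, with a leaf $v_{n/2+\ell}$ hanging off each spine vertex $v_\ell$; the hidden edge goes from leaf $v_{n/2+i}$ to leaf $v_{n/2+j}$ with $i<j$. Because the tail is a leaf, the only vertices reaching it are its spine ancestors $v_1,\dots,v_i$ and itself; because $i<j$, those spine ancestors already reach $v_{n/2+j}$ in the tree; and because the head is a leaf, it reaches nothing else. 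So the unique affected query is $(v_{n/2+i},v_{n/2+j})$, and under the uniform distribution over the $\binom{n/2-1}{2}$ choices of $(i,j)$, any deterministic algorithm needs $\Omega(n^2)$ queries in expectation to locate it; Yao's principle (which you invoke correctly) then finishes the proof. Your steps (1), (2), and (5) are fine as stated, but they must be run on a construction with this ``single affected query'' property rather than on parallel chains.
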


\begin{proof}
Let us start by proving the result for a deterministic algorithm. Let $n$ be an even number and consider a caterpillar graph on $n-1$ vertices as shown in figure \ref{fig:caterpillar}. Assume $v_1$ is the root. Pick $i,j \in \{ 1,2, \ldots , n/2-1 \}$ uniformly at random such that $i<j$ and add the edge from $v_{n/2+i}$ to $v_{n/2+j}$. Even if the algorithm knows the caterpillar graph, it still needs to make $\Omega(n^2)$ queries to detect the random edge because presence of the edge only changes the single query $(v_i,v_j)$.

Now, let us apply Yao's minimax principle. Consider a uniform distribution over all random edges with $i<j$. For any fixed deterministic algorithm, the expected query complexity is $\Omega(n^2)$. Hence, for any randomized algorithm, there exists a worst case input such that the expected query complexity is $\Omega(n^2)$.
\begin{figure}[h!]

\[\begin{tikzpicture}[scale=1]
\node[fill=black, circle, inner sep=2pt] (n1) at (0,0) {};
\node[fill=black, circle, inner sep=2pt] (n2) at (1,1) {};
\node[fill=black, circle, inner sep=2pt] (n3) at (2,2) {};

\node[fill=black, circle, inner sep=2pt] (n3) at (4,4) {};
\node[fill=black, circle, inner sep=2pt] (n3) at (5,5) {};
\node[fill=black, circle, inner sep=2pt] (n3) at (6,6) {};

\node[fill=black, circle, inner sep=2pt] (n1) at (1.5,0) {};
\node[fill=black, circle, inner sep=2pt] (n2) at (2.5,1) {};
\node[fill=black, circle, inner sep=2pt] (n3) at (4.5,3) {};
\node[fill=black, circle, inner sep=2pt] (n3) at (5.5,4) {};
\node[fill=black, circle, inner sep=2pt] (n3) at (6.5,5) {};

\node (n1) at (-1,0)  {\small $v_{n/2}$};
\node (n2) at (0,1)  {\small $v_{n/2}-1$};
\node (n3) at (1,2)  {\small $v_{n/2}-2$};
\node (n1) at (5.5,6)  {\small $v_{1}$};
\node (n2) at (4.5,5)  {\small $v_2$};
\node (n3) at (3.5,4)  {\small $v_3$};

\node (n1) at (7.5,5)  {\small $v_{n/2+1}$};
\node (n2) at (6.5,4)  {\small $v_{n/2+2}$};
\node (n3) at (5.5,3)  {\small $v_{n/2+3}$};
\node (n1) at (4.,1)  {\small $v_{n/2+n/2-2}$};
\node (n2) at (2.5,0)  {\small $v_{n/2+n/2-1}$};

\draw[thick,loosely dotted] (2,2) to (4,4);
\draw[thick,loosely dotted] (2.5,1) to (4.5,3);

\draw(0,0) to (2,2);
\draw(4,4) to (6,6);

\draw(1,1) to (1.5,0);
\draw(2,2) to (2.5,1);

\draw(4,4) to (4.5,3);
\draw(5,5) to (5.5,4);
\draw(6,6) to (6.5,5);
\end{tikzpicture}\]
\caption{Caterpillar Graph\label{fig:caterpillar}}
\end{figure}
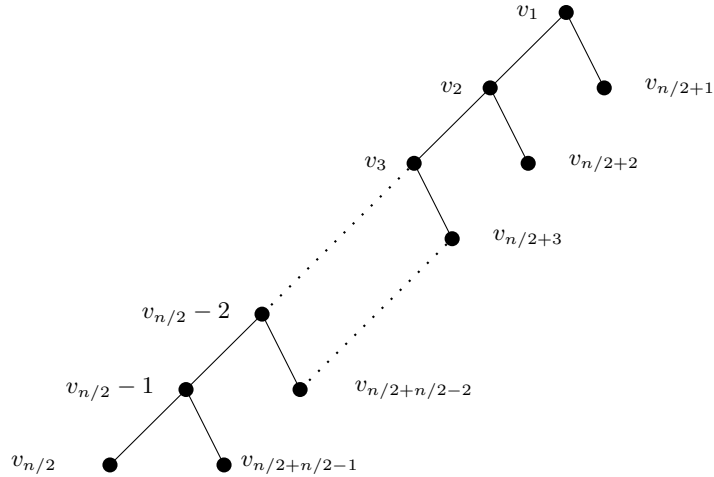

\end{proof}
Note that for a caterpillar graph, illustrated in Figure~\ref{fig:caterpillar}, $h=O(n)$.\footnote{We note that the same graph and Figure were employed in Janardhanan's work~\cite{Janardhanan17} on betweeness queries.} In Theorem \ref{thm:lowerbound_extended}, we extend this idea to get a lower bound as a function of $n$ and $h$.

\subsection*{Upper bound}

Our main result is an upper bound on the query complexity of Algorithm \ref{rootedgraph} which is a clean recursive randomized algorithm for learning an almost-tree. This algorithm can also be used to learn trees and hence generalises the main result in \cite{WangH19} with the loss of only an extra $O(\log n)$ factor. The upper bound on the query complexity of Algorithm \ref{rootedgraph} stated below asymptotically matches the lower bound in Theorem \ref{thm:lowerbound_extended} as a function of $n$ and $h$ ignoring the $\log$ factors.
 
The time complexities of the various subroutines of the algorithm are shown in Figure \ref{fig:algorithms}.
\begin{theorem}

\begin{figure}
\caption{Main algorithm and its time complexities} \label{fig:algorithms}
\begin{tikzpicture}[
  level 1/.style={sibling distance=70mm},
  edge from parent/.style={->,draw},
  >=latex]

\node[root] {Algorithm 1}
  child {node[level 2] (c1) {Algorithm 2 [$O(n (\log n)^3)$]}}
  child {node[level 2] (c2) {Algorithm 5 [$O(n h)$]}};

\begin{scope}[every node/.style={level 3}]
\node [below of = c1, xshift=15pt] (c11) {\mbox{Algorithm 3 [$O(n (\log n)^2)$]}};
\node [below of = c11, xshift=15pt] (c12) {Algorithm 4 [$O(n \log n)$]};
\node [below of = c2, xshift=15pt] (c21) {\mbox{Algorithm 6} [$O(nh)$]};
\end{scope}

\foreach \value in {1}
  \draw[->] (c1.west) |- (c1\value.west);

\foreach \value in {1}
  \draw[->] (c2.west) |- (c2\value.west);
\draw[->] (c11.west) |- (c12.west);
\end{tikzpicture}
\end{figure}

Algorithm \ref{rootedgraph} is a randomized algorithm that learns a path reconstructable almost-tree $G$ with maximum degree $d=O(1)$ using ${O}(n(\log n)^3+nh)$ path queries where $h$ is the height of $G$.
\end{theorem}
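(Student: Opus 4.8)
The plan is to prove the bound by reducing the almost-tree problem to a tree-learning problem and then accounting separately for the two parts of Algorithm 1, whose query costs are exactly the two summands. Write $G = T \cup \{(a,b)\}$, where $T$ is a rooted directed tree and $(a,b)$ is the extra edge. Since $G$ is path-query reconstructable, $(a,b)$ is non-transitive and creates no strongly connected component, so $a$ and $b$ are incomparable in $T$; hence for all $u,v$ we have $Q_G(u,v) = Q_T(u,v) \lor \bigl(Q_T(u,a)\wedge Q_T(b,v)\bigr)$, using the convention $Q_T(x,x)=1$. The structural fact I would exploit is that adding $(a,b)$ can neither create a new descendant of $b$ nor a new ancestor of $a$ without forming a cycle, so $D_G(b)=D_T(b)$ and $A_G(a)=A_T(a)$; these are recoverable in $O(n)$ queries, and outside the ``contaminated rectangle'' $\bigl(A_T(a)\cup\{a\}\bigr)\times\bigl(D_T(b)\cup\{b\}\bigr)$ the oracles $Q_G$ and $Q_T$ agree, while on the rectangle $Q_T(u,v)$ in fact depends only on whether $u$ lies above the least common ancestor of $a$ and $b$.

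First I would handle the extra edge (the $O(nh)$ term, via Algorithm 5 and Algorithm 6). Building a layered graph of $G$ from the root and comparing reachability sets layer by layer, one can pin down $a$ and $b$ and recover the tree edges along the root-to-$a$ and root-to-$b$ spines — exactly the places where $Q_G$ may differ from $Q_T$. Since a layered graph has height $h$ and each of its $h$ layers can be processed with $O(n)$ queries, this costs $O(nh)$, and afterwards we can simulate the oracle $Q_T$ from $Q_G$ at no asymptotic overhead.

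Next I would learn the remainder of $T$ with the recursive balanced-splitting routine (Algorithm 2), now run against the simulated tree oracle; this is where I would invoke the recursive approach of Wang and Honorio with the subroutine costs from Figure~\ref{fig:algorithms}. Algorithm 4 carries out the $n$ binary searches needed to locate vertices along a chain in $O(n\log n)$ queries; Algorithm 3 uses $O(\log n)$ random vertex pairs to find, with constant probability, an edge whose deletion splits the current subtree into pieces each of at most a constant fraction of its size, at cost $O(n(\log n)^2)$; since each successful split is balanced, the recursion has depth $O(\log n)$ with high probability, and the costs $O(m(\log m)^2)$ over the subproblems at one recursion level sum to $O(n(\log n)^2)$, giving $O(n(\log n)^3)$ overall. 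Failed split attempts are retried, so the bound holds in expectation, and a union bound over the $O(n)$ recursive calls gives it with high probability; the degree $d=O(1)$ is absorbed into the constant. Assembling the spine edges from the first part with the edges recovered here yields $T$, and adding $(a,b)$ yields $G$; the total is $O(nh) + O(n(\log n)^3) = O(n(\log n)^3 + nh)$.

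The main obstacle is the interaction between the extra edge and the recursion: the splitter estimates subtree sizes through descendant sets, and these are exactly what $(a,b)$ distorts, so the crux is to show that the extra edge together with the two affected spines can be identified cheaply — in $O(nh)$ queries — so that the recursive part sees an oracle indistinguishable from a genuine tree oracle. A secondary difficulty is the probabilistic part of the split step: one must show that a constant fraction of vertex pairs straddle a balanced edge even though a bounded-degree tree need not itself be balanced (a centroid-edge / heavy-path argument), and that the depth bound survives a union bound over all $O(n)$ subproblems generated by the recursion.
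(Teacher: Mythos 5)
Your proposal inverts the order of the paper's two phases, and that inversion is where the gap lies. The paper's Algorithm 1 first reconstructs a spanning (layered) tree $G_L$ of $G$ by recursive balanced splitting, tolerating the extra edge inside the randomized binary search (this is the $O(n(\log n)^3)$ term), and only then finds the cross edge; the $O(nh)$ cross-edge procedure (Algorithms 5 and 6) works by a recursive descent that at every node enumerates the children and the leaves of sibling subtrees \emph{in the already-known tree} $G_L$ and queries $(c,l)$ for each child $c$ and each leaf $l$ of a sibling subtree. You instead propose to find the extra edge and the two affected spines \emph{first}, in $O(nh)$ queries, by ``building a layered graph of $G$ from the root and comparing reachability sets layer by layer,'' and then to run the splitting recursion against a simulated clean tree oracle. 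But the layered graph is precisely the expensive object that the $O(n(\log n)^3)$ phase exists to produce: with path queries alone you cannot identify the vertices of layer $\ell$ with $O(n)$ queries per layer (determining the depth of a vertex $u$ amounts to computing $|A(u)|$, which costs $\Theta(n)$ queries per vertex, i.e.\ $\Theta(n^2)$ in total, and you describe no cheaper layer-by-layer scheme). So the step on which your whole reduction rests --- locating $a$, $b$, and the root-to-$a$ and root-to-$b$ spines within the $O(nh)$ budget before any tree structure is known --- is asserted rather than proved, and I do not see how to carry it out. Your structural observations ($D_G(b)=D_T(b)$, $A_G(a)=A_T(a)$, the contaminated rectangle) are correct but do not supply such a method.

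The tree phase also differs from the paper's in a way that matters once phase 1 fails. Your splitting step is the Wang--Honorio one: sample a random \emph{pair} and argue that a constant fraction of pairs straddle a balanced edge. The paper's Algorithm 4 instead samples a single vertex $i$, computes $D(i)$ and $A(i)$, and if $|D(i)|<|V|/3d$ performs a randomized binary search over $A(i)$ for a vertex $w$ with $|V|/3d\le|D(w)|\le|V|/3$, whose existence follows from the structure lemma quoted from Abrahamsen et al. This distinction is not cosmetic: the pair-sampling/path-finding argument is exactly what the extra edge distorts, which is why the paper runs the single-vertex ancestor search directly on $G$ and handles the extra edge inside the binary search (see the Remark about the two root-to-$i$ paths and line 22 of Algorithm 4). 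If you cannot first reduce to a clean tree oracle, you need an argument of that kind showing the split succeeds on the almost-tree itself; your proposal does not contain one.
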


The idea behind Algorithm \ref{rootedgraph} is to first find a layered graph in $G$ (this is done in line \ref{line:layeredgraph_line} of the algorithm). As a layered graph (say $G_L$) is a tree, we get $n-1$ edges of $G$. This means that we have reconstructed a spanning tree of $G$ and we are left with the task of finding one more edge in $G$ as we know that $G$ is an almost-tree. Let us call this edge a cross edge. In other words, a cross edge is the edge in $G$ that is not in the layered graph produced by line \ref{line:layeredgraph_line} of Algorithm \ref{rootedgraph}. The next task is to find the cross edge. This is done in line \ref{line:crossedges_line} of Algorithm \ref{rootedgraph}. 
To find the layered graph structure, in line \ref{line:layeredgraph_line} of Algorithm \ref{rootedgraph}, we call Algorithm~\ref{rootedgraph_layers}. This algorithm works recursively by finding an edge whose descendant set roughly splits the graph into equal parts. Hence, the depth of the recursion tree is $O(\log n)$. We show that with high probability, the randomized algorithm which finds such an edge (Algorithm \ref{splitgraph}) on a subset of vertices $V$ uses ${O}(|V| (\log |V|)^2)$ queries. This gives an overall query complexity of $O(n(\log n)^3)$ for finding a layered graph in $G$.

We need the following structure theorem rephrased from 
\cite{AbrahamsenBRS16}.
\begin{lemma}
Let $G=(V,E)$ be a directed rooted graph with root $r$ and maximum degree $d$. For any $v\in V$, there exists a $w\in D(v)$ such that
$$ \left\lceil \frac{|D(v)|}{3d}\right\rceil\leq |D(w)| \leq \left\lceil\frac{|D(v)|}{3}\right\rceil $$
\end{lemma}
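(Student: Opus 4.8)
The plan is to prove this by a greedy descent along a path in the tree structure rooted at $v$. First I would restrict attention to the descendant subgraph rooted at $v$; since $G$ is rooted with maximum degree $d$, the vertex $v$ has at most $d$ out-neighbors, and $D(v) \setminus \{v\}$ is partitioned (as sets, using a fixed layered/spanning tree of the descendant subgraph) among the subtrees hanging off those out-neighbors. Hence by averaging, some out-neighbor $u_1$ of $v$ satisfies $|D(u_1)| \geq (|D(v)| - 1)/d \geq |D(v)|/(3d)$ roughly — I would be careful to choose constants so the ceiling version $\lceil |D(v)|/(3d)\rceil \le |D(u_1)|$ holds, using that $|D(u_1)| \ge \lceil (|D(v)|-1)/d \rceil$ and that $3d \ge d$ comfortably absorbs the $-1$ and the ceiling slack.

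Next I would argue the upper bound by walking down. Start at $v$ and repeatedly move to the child (in the descendant subtree) with the largest descendant set; call the sequence $v = w_0, w_1, w_2, \dots$. Two consecutive terms satisfy $|D(w_{i+1})| \geq |D(w_i)|/(3d)$ — wait, that's the wrong direction; the point is that passing from $w_i$ to its largest child, the descendant count drops by at most a factor depending on $d$, but can drop by a lot in one step. The right idea: define $w_0 = v$ and at each step, if $|D(w_i)| > \lceil |D(v)|/3 \rceil$, move to the heaviest child $w_{i+1}$. As long as we are above the threshold $\lceil |D(v)|/3 \rceil$, taking the heaviest child keeps $|D(w_{i+1})| \geq (|D(w_i)| - 1)/d \geq \lceil |D(v)|/(3d)\rceil$ (since $|D(w_i)| - 1 \ge \lceil |D(v)|/3\rceil \ge \lceil |D(v)|/(3d) \rceil d$ up to rounding). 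The process strictly decreases $|D(w_i)|$, so it terminates; let $w$ be the last vertex with $|D(w)| > \lceil |D(v)|/3\rceil$ if one exists — no: let $w$ be the first vertex with $|D(w)| \le \lceil |D(v)|/3 \rceil$. Then its parent $w'$ had $|D(w')| > \lceil |D(v)|/3 \rceil \ge \lceil |D(v)|/(3d)\rceil$, so $w = $ heaviest child of $w'$ gives $|D(w)| \ge (|D(w')| - 1)/d \ge \lceil |D(v)|/(3d)\rceil$, and by choice of $w$ we have $|D(w)| \le \lceil |D(v)|/3\rceil$. Thus $w$ satisfies both bounds and lies in $D(v)$.

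The one edge case is when the descent never crosses below the threshold, i.e. $v$ is a leaf or $|D(v)|$ is so small that $\lceil |D(v)|/3\rceil \ge |D(v)|$; then $w = v$ itself works since $\lceil |D(v)|/(3d)\rceil \le |D(v)| \le \lceil |D(v)|/3 \rceil$ can only fail when $|D(v)| \le 2$, and those tiny cases I would check by hand (for $|D(v)| = 1$, $w = v$ gives $1 \le 1 \le 1$; for $|D(v)| = 2$, $\lceil 2/3 \rceil = 1$, and $v$ has a child $w$ with $|D(w)| = 1$). The main obstacle, and the only place requiring genuine care, is managing the interaction of the floor/ceiling operations with the $-1$ (the vertex $w_i$ itself is in $D(w_i)$ but not in any child's descendant set) and the factor $d$; I would set up the inequality $\lceil (m-1)/d \rceil \ge \lceil m/(3d) \rceil$ whenever $m - 1 \ge \lceil m/3 \rceil$, which holds since then $m - 1 \ge m/3$ so $(m-1)/d \ge m/(3d)$, and verify it survives the ceilings. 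Everything else is a routine monotone-descent argument.
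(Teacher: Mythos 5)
Your argument is correct. Note that the paper itself gives no proof of this lemma --- it is cited as a structure theorem rephrased from Abrahamsen et al.~\cite{AbrahamsenBRS16} --- but your heaviest-child descent is the standard argument for it: stop at the first vertex $w$ on the descent with $|D(w)|\le\lceil |D(v)|/3\rceil$, whose parent $w'$ satisfies $|D(w')|-1\ge\lceil |D(v)|/3\rceil$, so the integer $|D(w)|\ge\lceil(|D(w')|-1)/d\rceil\ge\lceil |D(v)|/(3d)\rceil$; the only points needing care are exactly the ones you flag (the ceiling arithmetic, termination via strict decrease of $|D(w_i)|$ in an acyclic graph, and the tiny cases $|D(v)|\le 2$), and your handling of each is sound. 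One small remark: for a general rooted digraph the children's descendant sets need only cover, not partition, $D(v)\setminus\{v\}$, but covering already suffices for the heaviest-child lower bound.
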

We call $w$ which roughly splits $D(v)$ as a splittable vertex.

\begin{algorithm}
\caption{Reconstruct Rooted Graph}\label{rootedgraph}
\begin{algorithmic}[1]
\Function{R{\scriptsize ECONSTRUCT}-R{\scriptsize OOTED}-G{\scriptsize RAPH}($V$)}{} 
\State $G_L=$ R{\scriptsize ECONSTRUCT}-L{\scriptsize AYERED}-G{\scriptsize RAPH}($V$)  
\algorithmiccomment{finds edges in layered graph}\label{line:layeredgraph_line}
\State $G_C=$ F{\scriptsize IND}-C{\scriptsize ROSS}-E{\scriptsize DGES}($G_L$) \label{line:crossedges_line}
\algorithmiccomment{finds cross-edges}
\State \Return $G_L\cup G_C $
\EndFunction
\end{algorithmic}
\end{algorithm}

\begin{algorithm}
\caption{Reconstruct Layered Graph}\label{rootedgraph_layers}
\begin{algorithmic}[1]
\Function{R{\scriptsize ECONSTRUCT}-L{\scriptsize AYERED}-G{\scriptsize RAPH}($V$)}{}
\If{$|V|\leq 2$}
\State \Return $V_G$ \algorithmiccomment{$V_G$ is the subgraph induced on $V$ by $G$}
\EndIf
\State $(V_1,V_2, e) \gets $S{\scriptsize PLIT}-G{\scriptsize RAPH}($V$)
\algorithmiccomment{finds $e$ that splits the graph into $V_1$ and $V_2$}
\State $G_1=$ R{\scriptsize ECONSTRUCT}-L{\scriptsize AYERED}-G{\scriptsize RAPH}($V_1$)
\algorithmiccomment{recurse in $V_1$}
\State $G_2=$ R{\scriptsize ECONSTRUCT}-L{\scriptsize AYERED}-G{\scriptsize RAPH}($V_2$)
\algorithmiccomment{recurse in $V_2$}
\State \Return $G_1\cup G_2 \cup \{e\}$
\EndFunction
\end{algorithmic}
\end{algorithm}

We start by analyzing the search sub-routine (Algorithm~\ref{search}). We claim that if the random vertex picked in line \ref{line:rand1} of the subroutine is in $D(s)$ for some splittable vertex $s$, then the subroutine will find $s$.  This is because if the random vertex picked in line \ref{line:rand1} is in $D(s)$ for some splittable vertex $s$, then we enter the $|D(i)| < |V|/3d$ case corresponding to the if statement in line \ref{line:iindes}. Inside this if statement, we set $P$ to be the set of potential splittable vertices among the ancestors of $i$ and keep updating $P$ by doing randomized binary search for $s$. This proves the claim and hence, the number of times the search sub-routine is called is $O(\log n)$.

\begin{remark}
Note that the presence of the additional edge in the almost-tree does not affect the overall structure of the algorithm. Its presence will be felt only in the randomised binary search (in Algorithm~\ref{search}) as there may be two paths from the root to vertex $i$. If both paths contain a splittable vertex, then the randomised binary search will find it. The difficult case is when only one of the two paths contains a splittable vertex. Suppose there are two paths $P$ containing the splittable vertex and $P'$ not containing the splittable vertex. Every time the randomised binary search picks a vertex in $i\in P' \setminus P$, we will end up deleting all the children of $i$ from the potential vertices for the next iteration of the randomised binary search in line \ref{line:deletevertices} of Algorithm~\ref{search}.
\end{remark}

\begin{algorithm}
\caption{Split Graph}\label{splitgraph}
\begin{algorithmic}[1]
\Function{S{\scriptsize PLIT}-G{\scriptsize RAPH}($V$)}{}
\State $splittable = 0$
\algorithmiccomment{keeps track of whether the output of S{\scriptsize EARCH}($V$) is splittable}
\While{$splittable = 0$}
\algorithmiccomment{repeat until the output of S{\scriptsize EARCH}($V$) is splittable}
\State $(splittable, v) =$ S{\scriptsize EARCH}($V$)
\EndWhile

\State \Return $(D(v),V\setminus D(v), (p(v),v))$
\EndFunction
\end{algorithmic}
\end{algorithm}

To analyze the query complexity of Algorithm~\ref{search}, the search sub-routine, we first note that the queries are only made in line \ref{line:query_review} and \ref{line:rand2}. In each call of search, line \ref{line:rand1} gets executed once and hence makes $O(n)$ queries. Line \ref{line:rand2} gets executed $O(\log n)$ times in expectation because it is a randomized binary search and hence makes $O(n\log n)$ queries in expectation. So, the expected query complexity of the search sub-routine is $O(n\log n)$.

The expected query complexity of Algorithm \ref{splitgraph} (split graph) is $O(n(\log n)^2)$ and the expected query complexity of Algorithm \ref{rootedgraph_layers} (reconstruct layered graph) is $O(n(\log n)^3)$.

\begin{algorithm}
\caption{Search}\label{search}
\begin{algorithmic}[1]
\Function{S{\scriptsize EARCH}($V$)}{}
\State $P=V$
\State pick $i\in P$ randomly  \label{line:rand1}
\algorithmiccomment{pick a random vertex $i$}
\State $\forall\ u \in V$, query $(u,i)$ and $(i,u)$  \label{line:query_review}
\If{$|D(i)|>|V|/3$}
\algorithmiccomment{check if $i$ and $i$'s ancestors are both not splittable}
\State \Return $(0,0)$
\EndIf
\If{$|V|/3d\leq D(i)\leq  |V|/3$}
\algorithmiccomment{check if $i$ is splittable}
\State \Return $(1,i)$
\EndIf
\If{$|D(i)| < |V|/3d$} \label{line:iindes}
\algorithmiccomment{check if $i$'s ancestors may be splittable}
\State $P=A(i)$
\While{$P\neq\emptyset$}
\algorithmiccomment{search over $i$'s ancestors for splittable vertices}
\State pick $i \in P$ randomly and query \\ 
\ \ \ \ \ \ \ \ \ \ \ \ \ \ $(u,i)$ and $(i,u)\ \forall u\in V$. \label{line:rand2}
\If{$|V|/3d\leq |D(i)|\leq |V|/3$}
\State \Return $(1,i)$
\EndIf
\If{$|D(i)|>|V|/3$}
\State $P = (P \cap D(i))\setminus\{i\}$
\EndIf
\If{$|D(i)| < |V|/3d$} 
\State $P = P \setminus D(i)$ \label{line:deletevertices}
\EndIf
\EndWhile  
\State \Return (0,0)

\EndIf

\EndFunction
\end{algorithmic}
\end{algorithm}

Now we analyze the algorithm to find cross edges (Algorithm \ref{cross_edges}). We assume that all sub-routines under this algorithm have access to $G_L$. Algorithm \ref{cross_edges} first calls Algorithm \ref{cross_edges_recursive}, 
which is a recursive procedure for finding a triplet of vertices $v,a,b$ that satisfy the following conditions:
\begin{itemize}
\item $v=p(a)$.
\item $b$ is a leaf vertex.
\item $a$ and $b$ belong to different subtrees of $v$ under $G_L$.
\item $Q(a,b)=1$. 
\end{itemize}

We refer to such a vertex $v$ as a top vertex.
This must mean that the extra edge has caused $Q(a,b)=1$. Once we know $v,a,b$, we can find the extra edge exactly by traversing $G_L$. This is done in the F{\scriptsize IND}-C{\scriptsize ROSS}-E{\scriptsize DGES}-S{\scriptsize PECIFIC} algorithm.

Now, we turn to analyzing Algorithm \ref{cross_edges_recursive}. We start from the root $r$. If $r$ is a top vertex, then the queries in line \ref{line:query1} of Algorithm \ref{cross_edges_recursive} will find it. If not, we have split the problem into smaller subproblems corresponding to the descendant set of each immediate child of $r$. We recursively call the same function for each immediate child in line \ref{line:recurse1} of Algorithm \ref{cross_edges_recursive}.

F{\scriptsize IND}-C{\scriptsize ROSS}-E{\scriptsize DGES}-S{\scriptsize PECIFIC} algorithm is given $v,a,b$ as input where $v$ is a top vertex. Hence, the cross edge is of the form $(c_1,c_2)$ where $c_1\in D(a)$ and $c_2\in A(b)$. Therefore, this algorithm simply traverses over the descendants of $a$ starting from the immediate children of $a$ and the ancestors of $b$ starting from the immediate parent of $b$ until it finds $(c_1,c_2)$ exactly.

\begin{algorithm}
\caption{Find Cross Edges}\label{cross_edges}
\begin{algorithmic}[1]
\Function{F{\scriptsize IND}-C{\scriptsize ROSS}-E{\scriptsize DGES}($G_L$)}{}
\State $(found, v, a, b)=$ F{\scriptsize IND}-C{\scriptsize ROSS}-E{\scriptsize DGES}-R{\scriptsize ECURSIVE}($r$)
\State $(p,q)=$ F{\scriptsize IND}-C{\scriptsize ROSS}-E{\scriptsize DGES}-S{\scriptsize PECIFIC}($v,a,b$)
\State \Return $(p,q) $
\EndFunction
\end{algorithmic}
\end{algorithm}

The number of queries made in line \ref{line:query1} of Algorithm \ref{cross_edges_recursive} is $O(n)$ assuming that $d=O(1)$. Suppose the height of $G_L$ is $h$, the depth of the recursion is also $h$. Hence, the query complexity is $O(nh)$.

\begin{algorithm}
\caption{Find Cross Edges Recursive}\label{cross_edges_recursive}
\begin{algorithmic}[1]
\Function{F{\scriptsize IND}-C{\scriptsize ROSS}-E{\scriptsize DGES}-R{\scriptsize ECURSIVE}($v$)}{}
\State Let $C$ be the immediate children of $v$ in $G_L$.
\For{$c \in C$}
\State Let $P_c= \cup_{k\in C, k\neq c}L_k$ where $L_k$ are the leaves in $D(k)$.
\State $\forall l\in P_c$, query $(c,l)$ \label{line:query1}
\If{$Q(c,l)=1$ for some query in line \ref{line:query1}}
\State \Return $(1,v,c,l)$
\EndIf
\EndFor

\For{$c \in C$}
\State $(found, k, a, b) = $ F{\scriptsize IND}-C{\scriptsize ROSS}-E{\scriptsize DGES}-R{\scriptsize ECURSIVE}($c$) \label{line:recurse1}
\If{$found = 1$}
\State \Return $(1,k, a, b)$
\EndIf
\EndFor

\EndFunction
\end{algorithmic}
\end{algorithm}

Hence, the overall query complexity of Algorithm \ref{rootedgraph} is $O(n(\log n)^3 +nh)$. Note that when $G_L$ is a complete $d$-ary trees, this gives a $\tilde{O}(n)$ algorithm where as for caterpillar graphs, we get a $O(n^2)$ algorithm which matches with the lower bound proved in Theorem \ref{thm:lowerbound}. We can also extend Theorem \ref{thm:lowerbound} for arbitrary $h$ to get a result of the following form:

\begin{theorem}\label{thm:lowerbound_extended}
For every $d = O(1)$ and $h> (1+ c) \log_d n$ for $c>0$, there exists a path query reconstructable almost-tree $G$ on $n-1$ vertices with maximum degree $d$ and height $h$ such that any deterministic algorithm to reconstruct $G$ requires at least $\Omega(nh)$ queries.
\end{theorem}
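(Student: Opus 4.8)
The plan is to generalize the caterpillar construction from Theorem~\ref{thm:lowerbound} so that the ``spine'' has the prescribed height $h$ while the total vertex count stays $n-1$, and then place a hidden random cross edge among a large set of candidate positions whose detection forces $\Omega(nh)$ queries. The condition $h > (1+c)\log_d n$ is exactly what guarantees that, after reserving enough vertices to build a path of length $h$, there are still $\Omega(n)$ vertices left over to hang off the path as potential ``legs,'' so that the number of distinguishable cross-edge placements is $\Omega(n)$ and each placement is witnessed by only a single path query.

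First I would fix the skeleton tree $T$: take a directed path (the spine) $u_0 \to u_1 \to \cdots \to u_h$ of length $h$ from the root $u_0$. This uses $h+1$ vertices; since $h = O(\mathrm{poly}\log n)$ is allowed to be as large as $\Theta(n)$ we must be a little careful, but the hypothesis $h > (1+c)\log_d n$ is a \emph{lower} bound on $h$, not an upper bound, so I will actually split into the two regimes. When $h \le n/2$, say, we have at least $n/2$ leftover vertices; attach them as leaves (or short pendant paths, respecting the degree bound $d$) distributed along the bottom portion of the spine, so that a set $A$ of $\Theta(n)$ of these pendant vertices are descendants of a common spine vertex but lie in pairwise-incomparable subtrees — mimicking the second half of the caterpillar. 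When $h > n/2$, the spine itself already has $\Theta(n)$ vertices, and we take $A$ to be $\Theta(n)$ consecutive spine vertices near the bottom; the role of the $(1+c)\log_d n$ bound here is only to ensure $h$ is large enough that such a construction with the claimed height is consistent with a $d$-ary-type branching structure when needed. In both cases I get a family of almost-trees indexed by an ordered pair $(x,y)$ of vertices in $A$ with $x$ an ancestor-free-relative of $y$ (i.e. the added edge $x \to y$ is neither transitive nor creates a cycle, so the instance is path-query reconstructable), and $|A| = \Theta(n)$ gives $\Theta(n^2)$ but we only need $\Omega(n) \cdot \Omega(h)$ distinguishable instances — I would instead index by, for each of $\Omega(n)$ legs, one of $\Omega(h)$ spine-heights the edge can point to, which is the clean way to get the $nh$ count.

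The core combinatorial fact, exactly as in Theorem~\ref{thm:lowerbound}, is that adding the edge $x\to y$ to $T$ changes the value of $Q_G$ on precisely one ordered pair among the pairs the algorithm cares about: $Q_G(x,y)$ flips from $0$ to $1$, and every other query whose answer changes is of the form $(x', y')$ with $x'$ an ancestor of $x$ and $y'$ a descendant of $y$ — but by choosing the pendant legs to be leaves (out-degree $0$) and choosing $x$ itself to be a leaf's attachment point with no further descendants, we arrange that $x$ has a unique relevant descendant and $y$ a unique relevant ancestor, so the only ``new'' reachability is the single pair we planted (together with a constant number of forced consequences that a learner who already knows $T$ can predict). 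Thus an adversary can answer consistently with $T$ (no extra edge) until the learner hits the unique revealing query, and the $\Omega(nh)$ placements mean any deterministic learner must in the worst case probe $\Omega(nh)$ pairs. I would then invoke Yao's minimax principle against the uniform distribution over these $\Omega(nh)$ instances to conclude the same bound for randomized algorithms in expectation, exactly mirroring the second paragraph of the proof of Theorem~\ref{thm:lowerbound}.

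The main obstacle I anticipate is the bookkeeping that simultaneously enforces (i) maximum degree $d = O(1)$, (ii) height exactly $h$, (iii) $\Theta(nh)$ genuinely distinguishable cross-edge placements each revealed by a unique query, and (iv) the ``path query reconstructable'' promise (extra edge not transitive, not an edge to an ancestor). Points (i) and (iii) are in mild tension — to get $\Omega(n)$ candidate legs with a degree bound we may need the legs to hang off $\Omega(n/d) = \Omega(n)$ distinct spine vertices, which is fine when $h = \Omega(n)$ but requires legs of small bounded length reused across spine positions when $h$ is only logarithmic; the hypothesis $h > (1+c)\log_d n$ is what makes the accounting close, since it guarantees at least $n^{c/(1+c)}$-ish slack, but one should double-check the constant $c$ enters only as a multiplicative loss in the hidden constant of $\Omega(nh)$. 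I would handle this by treating the two regimes ($h$ poly-logarithmic versus $h$ polynomial in $n$) separately and in each giving an explicit gadget, then observing the bound $\Omega(nh)$ holds in both.
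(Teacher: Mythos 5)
Your high-level plan (extend the caterpillar, get $\Theta(nh)$ candidate placements each revealed by a unique query, then run the standard adversary/Yao argument) matches the paper's intent, but your concrete construction has a gap at the decisive point. The paper's construction is: a caterpillar on $\Theta(h)$ vertices, with a complete $d$-ary tree having $\Theta(n)$ leaves hung below the \emph{last} spine vertex; the hidden edge goes from one of the $\Theta(h)$ caterpillar legs to one of the $\Theta(n)$ $d$-ary-tree leaves. This gives the product $\Theta(h)\cdot\Theta(n)$ of instances, and crucially each instance flips exactly one query: the source is a leg whose only ancestors are spine vertices, all of which \emph{already} reach the entire $d$-ary tree, and the target is a leaf with no descendants. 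The hypothesis $h>(1+c)\log_d n$ is there so that the $d$-ary tree's depth $\log_d n$ does not dominate the prescribed height. Your proposal omits this $d$-ary tree gadget entirely, and the substitute indexing you propose --- ``for each of $\Omega(n)$ legs, one of $\Omega(h)$ spine-heights the edge can point to'' --- does not satisfy the unique-witness property: an edge incident to a spine vertex $u_j$ changes $Q(\cdot,\cdot)$ on a whole contiguous segment of the spine (either the source now reaches everything in $D(u_j)$, or every vertex between the leg's attachment point and $u_j$ now reaches the target). A learner can therefore locate the spine endpoint by binary search in $O(\log h)$ queries per leg, giving an $O(n\log h)$ algorithm against your instance family and destroying the $\Omega(nh)$ bound.

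The same issue undercuts your fallback of taking pairs from a set $A$ of $\Theta(n)$ pendant vertices ``in pairwise-incomparable subtrees'': with $d=O(1)$ and $h\ll n$, you cannot hang $\Theta(n)$ leaves \emph{directly} off the spine, so the sources must sit at depth $>1$ inside subtrees, and then the source's ancestors within its own subtree do not already reach the target, again flipping many queries per instance. The fix is exactly the paper's asymmetry: keep the $\Theta(h)$ \emph{sources} as depth-one caterpillar legs (so their ancestor sets are pure spine), and manufacture the $\Theta(n)$ \emph{targets} as leaves of a bounded-degree tree placed below the entire spine (so they are reachable from every spine vertex and have no descendants). With that gadget in place, the rest of your argument (adversary answers $0$ until forced, then Yao if you want the randomized version) goes through as you describe.
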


The proof involves modifying Theorem \ref{thm:lowerbound} by considering a caterpillar graph on $\Theta(h)$ vertices and a complete $d$-ary tree with $\Theta(n)$ leaves attached to the last level of the caterpillar graph. This construction can be made to work for any $h> (1+ c) \log_d n$ for $c>0$. Now, add an edge randomly from one of the leaves of the caterpillar graph to one of the leaves of the complete $d$-ary tree. Detecting this random edge requires $\Omega(nh)$ queries as there are $\Theta(h)$ leaves in the caterpillar graph and $\Theta(n)$ leaves in the complete $d$-ary tree.

\section{Open problems}
Our algorithm works when there is exactly one extra edge. It would be interesting to see if this approach can be generalised when multiple extra edges are present. It would also be interesting to find an algorithm where the query complexity is a function of the number of extra edges. 

\section*{Acknowledgements}
This work by supported in part by NSF grants CCF-1848966 and CCF-1934915.

\bibliography{paper}

\end{document}